\documentclass[twocolumn,10pt]{article}

\usepackage{usenix}
\usepackage[utf8]{inputenc}

\usepackage{amsmath}
\usepackage{amssymb}
\usepackage{amsthm}
\usepackage{hyperref}
\usepackage[capitalise]{cleveref}
\usepackage{csquotes}
\usepackage{mathtools}
\usepackage{nicefrac}
\usepackage{algorithm}
\usepackage{algpseudocode}
\usepackage{tikz}
\usepackage{todonotes}
\usepackage{booktabs}
\usepackage{siunitx}
\usepackage{qtree}
\usepackage{fontawesome}

\usetikzlibrary{shapes}
\usetikzlibrary{calc}

\usepackage[backend=biber,maxbibnames=1]{biblatex}
\newcommand*\prot{FoSAM}
\newcommand*\etal{\textit{et al.}}

\newcommand*\integers{\mathbb{Z}}
\newcommand*\bits{\left\{0, 1\right\}}
\newcommand*\unidraw{\leftarrow^{\mathrm{R}}}
\newcommand*\Group{\mathbb{G}}
\newcommand*\Ge{\mathbb{G}_1}
\newcommand*\Gz{\mathbb{G}_2}
\newcommand*\Gt{\mathbb{G}_\tau}

\newcommand*\omnet{OMNeT\nolinebreak[4]\hspace{-0.05em}\raisebox{.12ex}{\bf ++}}

\newcommand*\FsKeyGen{\mathrm{FS.KeyGen}}
\newcommand*\FsUpdate{\mathrm{FS.Update}}
\newcommand*\FsEncrypt{\mathrm{FS.Encrypt}}
\newcommand*\FsDecrypt{\mathrm{FS.Decrypt}}
\newcommand*\PubKey{\mathrm{pk}}
\newcommand*\PrivKey{\mathrm{sk}}
\newcommand*\FsBreakin{\mathtt{breakin}}
\newcommand*\FsChallenge{\mathtt{challenge}}

\newcommand*\HibeKeyGen{\mathrm{HIBE.KeyGen}}
\newcommand*\HibeUskGen{\mathrm{HIBE.SkGen}}
\newcommand*\HibeEnc{\mathrm{HIBE.Enc}}
\newcommand*\HibeDel{\mathrm{HIBE.Del}}
\newcommand*\HibeDec{\mathrm{HIBE.Dec}}
\newcommand*\HibeId{\mathtt{id}}
\newcommand*\MasterSecret{\mathrm{msk}}
\newcommand*\DelKey{\mathrm{dk}}

\newcommand*\adversary{\mathcal{A}}
\newcommand*\challenger{\mathcal{C}}
\newcommand*\protocol{\Pi}
\newcommand*\notion{X}
\newcommand*\RML{RM\bar{L}}

\newcommand*\Input{\textbf{input}\ }

\newtheorem{definition}{Definition}
\newtheorem{theorem}{Theorem}
\newtheorem{lemma}{Lemma}
\newtheorem{game}{Game}

\DeclarePairedDelimiter\abs{\lvert}{\rvert}

\title{\prot: Forward Secret Messaging in Ad-Hoc Networks}
\author{%
    {\rm Daniel Schadt}\\
    Karlsruhe Institute of Technology
    \and
    {\rm Christoph Coijanovic}\\
    Karlsruhe Institute of Technology
    \and
    {\rm Thorsten Strufe}\\
    Karlsruhe Institute of Technology
}
\date{}

\begin{document}

\maketitle

\begin{abstract}
Apps such as Firechat and Bridgefy have been used during recent protests in Hong Kong and Iran, as they allow communication over ad-hoc wireless networks even when internet access is restricted.
However, these apps do not provide sufficient protection as they do not achieve forward secrecy in unreliable networks.
Without forward secrecy, caught protesters' devices will disclose all previous messages to the authorities, putting them and others at great risk.
In this paper, we introduce FoSAM, the first protocol to provide proven anonymous and forward secret messaging in unreliable ad-hoc networks.
Communication in FoSAM requires only the receiver's public key, rather than an interactive handshake.
We evaluate the performance of FoSAM using a large-scale simulation with different user movement patterns, showing that it achieves between 92\% and 99\% successful message delivery.
We additionally implement a FoSAM prototype for Android.

\end{abstract}

\section{Introduction}

Instant messaging is a valuable tool for organizing large-scale protests.
However, authoritarian governments may restrict internet connectivity in order to hinder the communication between protesters, as has happened in Russia\footnote{Evidence of internet disruptions in Russia during Moscow opposition protests, \emph{NetBlocks}, 2019} and Iran\footnote{Iran's Internet Shutdown Hides a Deadly Crackdown, \emph{Wired}, 2022}.
This renders even secure chat infrastructures like Signal or Matrix useless, and as a result, apps like Firechat and Bridgefy that rely on peer-to-peer wireless ad-hoc networks (MANETs) and function without the internet have been used by protesters.

Such peer-to-peer apps use available low-range wireless communication media such as \emph{Bluetooth}, \emph{Bluetooth Low Energy} or \emph{Wi-Fi Direct} to pass messages from device to device.
Messages are flooded through the network to enable communication between users who do not have a direct wireless connection between their devices.
As long as there is a path of devices exchanging messages between two users, they can communicate. 
However, due to the dynamic nature of the network and the movement of users however, the message transport is unreliable, and messages are only delivered on a best effort basis.

Security in a MANET messaging system is essential, as adversaries can easily infiltrate the network with malicious devices and eavesdrop on nearby communications.
To protect the privacy of protesters, the protocol must therefore achieve receiver-message unlinkability~\cite{kuhn2019}, that is, it must hide the intended receiver of a message even against malicious devices.
Further, key compromise is a real threat when government agents can physically confiscate protesters' devices to read out the secret keys.
To protect against this, the protocol should achieve forward secrecy~\cite{guenther1990}, meaning that even if a key is compromised, it cannot be used to decrypt messages captured in the past.
In addition, users should be able to add new contacts without a reliable internet connection.

Typically, cryptographic techniques such as the Diffie-Hellman key exchange~\cite{diffie1976} or Signal's double ratchet~\cite{alwen2019} are implemented to achieve forward secrecy.
However, their implementation is very challenging in the MANET setting because the network is unreliable and messages can be lost, arrive out of order, or be delayed.
As a result, existing systems do not meet these stringent security requirements~\cite{albrecht2022}, putting protesters at risk.

To fill this niche, we design \prot{}, a system that is both private and achieves forward secrecy in unreliable networks.
We do this by basing forward secrecy not on the chronological delivery of messages, but rather on timed epochs, so that even lost or out-of-order messages are protected.
Once an epoch has passed, messages sent during a previous epoch cannot be decrypted.

Our encryption scheme is based on the works of Canetti \etal{}~\cite{canetti2003} and Blazy \etal{}~\cite{blazy2014} and fulfills our security goals:
First, our scheme provides key-privacy, ensuring that messages reveal no information about their intended recipient.
Second, we achieve forward secrecy without interaction, such that messages are protected even when they get lost.
Third, we can add contacts simply by adding their public key, requiring no handshake to start a conversation.
This means that for example protesters can add new contacts via a QR code from a poster or a flyer, even after the protest has started.

As our forward secrecy is based on time epochs, it is important that the devices' clocks are synchronized.
We do not require a strict millisecond accuracy, but the more drift there is between two devices, the more likely it is that messages will be encrypted \enquote{for the wrong epoch}.
To avoid this issue, we allow devices to synchronize their clocks with their neighbors using an adapted version of the leader election algorithm of Byrenheid \etal{}~\cite{byrenheid2020}.
In our evaluation, we show that in most scenarios, the time can be synchronized in less than 10 minutes, and more than 90\% of messages arrive successfully.

In summary, we make the following contributions:

\begin{itemize}
    \item
    We adapt the leader election algorithm of Byrenheid \etal{}~\cite{byrenheid2020} to do time synchronization and evaluate its performance.
    
    \item
    We devise a messaging protocol for peer-to-peer wireless networks and prove it to be forward secret.

    \item
    We evaluate the expected performance via simulations and benchmarks.

    \item
    We provide a functioning prototype application that implements our protocol and can be used on Android smartphones.
\end{itemize}

\section{Related Work}

A few systems aim to provide messaging in ad-hoc networks with various trade-offs.

Rangzen~\cite{lerner2016} provides a messaging system that allows its users to \enquote{anonymously get the word out}.
It focuses on providing anonymous broadcasting in scenarios where the internet is unavailable, by having devices opportunistically forward messages they encounter.
It also uses social graphs to assign priorities to messages and prevent denial of service attacks.
Rangzen however does not provide forward secrecy, as all messages are publicly visible.
Because messages are publicly accessible, an adversary can also single out a user and determine what the user is sending.

Moby~\cite{pradeep2022} is another system that provides messaging during an internet blackout and explicitly implements forward secrecy.
Their message distribution works similarly to Rangzen (opportunistic exchange of messages), and they use the double ratchet~\cite{perrin2016} to ensure forward secrecy.
However, their design targets long-term messaging, that is messages that arrive over the span of several days or weeks.

Another system has been proposed by Perry \etal{}~\cite{perry2022}.
Their system is aimed at real-time communication in large gatherings, but their encryption does not provide forward secrecy.

In addition to academic works, some systems have also been deployed in practice:

Firechat is an app that provided messaging in ad-hoc networks, and it has been used in various protests\footnote{FireChat --- the messaging app that’s powering the Hong Kong protests, \emph{The Guardian}, 2014}\footnote{Firechat updates as 40,000 Iraqis download `mesh' chat app in censored Baghdad, \emph{The Guardian}, 2014}.
The app supported end-to-end encrypted chats, but it is unclear what security guarantees it provided.
In addition, the app has been discontinued and is no longer available for download.

Bridgefy~\footnote{\url{https://bridgefy.me/}} is similar to Firechat.
While they have adopted Signal's encryption for end-to-end encrypted communication, Bridgefy has been shown to be insecure~\cite{albrecht2022}.
Albrecht \etal{} demonstrate that messages contain enough metadata that an adversary can infer social graphs, and they were able to eavesdrop on one-to-one messages due to a vulnerability in the implementation.

Briar~\footnote{\url{https://briarproject.org/}} is a messenger that relays messages over Tor (when the internet is available) or over Bluetooth and WiFi (when the internet is unavailable).
However, unlike Firechat and Bridgefy, it only relays messages over direct connections.
Relaying a message over multiple hops only works if the message is a group message sent to all users along the path.
As such, the range of private messages is limited.

We therefore fill the gap of forward-secret real-time communication in an ad-hoc network by designing a \enquote{secure Firechat}.

\section{Background}

Throughout this paper, we will use \(\Ge\), \(\Gz\) and \(\Gt\) to be cyclic groups with generators \(g_1\), \(g_2\) and \(g_\tau\).
Further, let \(e : \Ge \times \Gz \rightarrow \Gt\) be an efficiently computable bilinear map between them.
A bilinear map is a map with the following properties~\cite{boneh2005}:

\begin{itemize}
    \item
    Bilinearity: for all \(u \in \Ge, v \in \Gz\) and \(a, b \in \integers\), we have \(e(u^a, v^b) = e(u, v)^{ab}\).

    \item
    Non-degeneracy: \(e(g_1, g_2) \ne 1\).
\end{itemize}

We base the security of our construction on the \emph{Matrix Diffie-Hellman Assumption} (MDDH), a generalization of the Diffie-Hellman Assumption that roughly states that you cannot efficiently determine whether \enquote{a given vector in \(\Group^l\) is spanned by the columns of a certain matrix \([A] \in \Group^{l \times k}\)}~\cite{escala2017}.

\subsection{Forward Secrecy}

Forward secrecy intuitively states that a key compromise at a certain time does not compromise ciphertexts from before the key compromise.
This is achieved by \emph{evolving} the key, which is only possible in one direction.
Formally, we use the definitions of Canetti \etal{} to capture the notion of key evolution and forward secrecy:

\begin{definition}[ke-PKE]
    A key-evolving public key encryption scheme consists of four algorithms \((\FsKeyGen, \FsUpdate, \FsEncrypt, \FsDecrypt)\), with the following properties~\cite{canetti2003}:
    
    \begin{itemize}
        \item
        \(\FsKeyGen(\lambda) \to (\PubKey, \PrivKey_0)\) takes as input the security parameter and returns a public-key/secret-key pair \((\PubKey, \PrivKey_0)\).
    
        \item
        \(\FsUpdate(\PrivKey_i, i)\to \PrivKey_{i+1}\) takes as input a secret key \(\PrivKey_i\) and the epoch \(i\), and returns as output the secret key of the next epoch \(\PrivKey_{i + 1}\).
    
        \item
        \(\FsEncrypt(m, i, \PubKey)\to c\) takes as input a message \(m\), the current epoch \(i\) and the public key \(\PubKey\) and returns the ciphertext \(c\).
    
        \item
        \(\FsDecrypt(c, i, \PrivKey_i)\to m\) takes as input the ciphertext \(c\) for epoch \(i\) and the secret key for the right epoch \(\PrivKey_i\), and returns the message \(m\).
    \end{itemize}
    
    We require that the encryption scheme is correct, meaning that \(\FsDecrypt(\FsEncrypt(m, i, \PubKey), \PrivKey_i) = m\) for all messages \(m\) and all \((\PubKey, \PrivKey_i) \leftarrow \FsKeyGen\).
\end{definition}

We call a key evolving PKE scheme \emph{forward secure} if it fulfills the following definition of fs-CPA:

\begin{definition}[fs-CPA~\cite{canetti2003}]
    A key-evolving public key encryption scheme is forward secure if all PPT adversaries have only a negligible advantage in the following game:

    \begin{itemize}
        \item
        The challenger uses \(\FsKeyGen\) to generate a pair \((\PubKey, \PrivKey_0)\).

        \item
        The adversary issues one \(\FsChallenge(j, m_0, m_1)\) query.
        The challenger chooses \(b \unidraw \bits\) randomly and returns \(C \coloneqq \FsEncrypt(m_b, j, \PubKey)\).
        
        \item
        The adversary issues one \(\FsBreakin(i)\) query.
        The challenger replies by using \(\FsUpdate\) \(i\) times to generate \(\PrivKey_i\) and returns it.

        \item
        The queries are subject to \(0 \le j < i\).

        \item
        The adversary now guesses \(b'\) and wins if \(b' = b\).
    \end{itemize}
\end{definition}

We use the name \emph{ratchet} to refer to a fs-CPA PKE.

\section{Models \& Goals}

In this section, we introduce how we model our system and which goals we want to achieve.

\subsection{System Model}

In our system, we assume that there is no trusted third party or a shared setup phase.
At any point in time, users who were previously not part of the system can join and participate in the network.
The only thing that we require is that a user can retrieve the public key of the person they want to communicate with out-of-band, for example by retrieving the key over the internet in advance, or by sharing it locally via QR codes.

We model our communication by using broadcast based wireless media.
We assume that each device can broadcast to any device within a certain radius, and we assume that there is a limit to the bandwidth with which the device can transmit data.
Therefore, each transmission takes a non-zero amount of time to complete

We do however \emph{not} make any assumption about interference due to neighboring devices.
We assume that enough free frequencies are available for the devices to use, and that the underlying technology stack will handle the cases where interference happens.
Existing apps have shown this to be a reasonable assumption:

Even at the highest activity levels, about 20 simultaneous Bluetooth peer-to-peer communications can succeed without interference~\cite{bocker2017}.
If we assume that not all devices exchange messages at the same time, but rather spread out their synchronizations, we can estimate the number of supported devices in a single Bluetooth coverage area.
With 30 synchronization runs per 30 seconds (1 second per synchronization), we can already support 600 devices (in a single Bluetooth range).
By reducing the Bluetooth transmission power and relying on the multi-hop message delivery, we can reduce the inference even further.

\subsection{Threat Model}

We assume the existence of a global, active adversary.
This adversary can eavesdrop on all traffic on the network as well as send own packets.

For the receiver, we assume that it will act honest up to a certain point in time \(T^\prime\).
Before \(T^\prime\), the receiver will not share any keys with the adversary and it will execute the protocol according to its specification.
After \(T^\prime\), the receiver will share its keys with the adversary.

We do consider attacks on the availability of the network (such as via DoS or jamming) to be out of scope for this work.

\subsection{Goals}

Our goal is to provide \emph{forward-secret unidirectional anonymous communication} on the application layer against a global active adversary.

\paragraph{Anonymous Communication}
Specifically, we want to achieve receiver-message unlinkability, which ensures that the adversary cannot tell who received a message or for whom it is intended.
The corresponding notion \(\RML\) is defined as a generic indistinguishability game in \cite{kuhn2019} as follows:

\begin{definition}[Kuhn's Privacy Game~\cite{kuhn2019}]
    The privacy game is defined between a challenger \(\challenger{}\), an adversary \(\adversary{}\) and the protocol \(\protocol{}\):

    \begin{enumerate}
        \item
        \(\challenger{}\) picks a challenge bit \(b\).

        \item
        \(\adversary{}\) sends a query containing two scenarios \(r_0\) and \(r_1\) containing transcripts of conversations to \(\challenger{}\).

        \item 
        \(\challenger{}\) verifies that the query is valid, i.e. that the scenarios only differ in the information that is supposed to be protected according to the notion \(\notion{}\) in question.

        \item 
        If the query is valid, \(\challenger{}\) forwards \(r_b\) to \(\protocol{}\).

        \item 
        The output \(\protocol{}(r_b)\) of \(\protocol{}\) is given to \(\adversary{}\).

        \item 
        \(\adversary{}\) gives a guess \(b'\).
    \end{enumerate}

    We say that a protocol \(\protocol{}\) fulfills notion \(\notion{}\) when \(b' = b\) only with negligible advantage.
\end{definition}

For \(\notion{} = \RML{}\), we require that neither scenario contains the empty communication, that each scenario contains the same number of messages received for each recipient and that the scenarios differ only in the recipients of the messages.

For this paper, we focus on receivers, as sender-message-unlinkability is hard to achieve in a MANET-based system, assuming that a user can be \enquote{singled out} and surrounded:
If the adversary sees the user send a message that they did not previously receive, the adversary can assume that the user sent the message.
That is, even if the message \emph{theoretically} contains no information linking it to the sender, the physical act of sending the message is still visible.

\paragraph{Forward Secrecy}
Further, want the receiver-message unlinkability to be forward secret.
In the case of receiver corruption in epoch \(T^\prime\), the message unlinkability still holds for all messages from epochs \(T\) before \(T^\prime\).
For the current epoch, receiver-message unlinkability trivially breaks for corrupted receivers, as they have to be able to decrypt incoming messages intended for them.

\paragraph{Unidirectionality}
We require a protocol that does not perform bidirectional handshakes or ratchet updates, as those are unreliable over MANETs.
Instead, conversations with new contacts should require only a key to be transferred, which can be done in a variety ways out-of-band, such as using QR codes, Bluetooth, or the internet (if available).

We do note that this version of the protocol does not hide communication patterns.
We leave it as future work to provide stronger notions than \(\RML\).

\section{Design}

To achieve our goals, our design must work without a reliable transport between two devices, provide receiver-message unlinkability, and be forward-secret.
As we cannot rely on the commonly used double ratchet or the internet for our application, we will discuss our new design in this section.

We design \prot{} to exchange messages using Bluetooth Low Energy.
Devices exchange messages with other devices in their range using a \emph{smart broadcast}~\cite{perry2022}, a technique in which a device requests only those messages from its neighbors that it has not yet seen.
This minimizes unnecessary transmissions of messages that have already arrived, but still allows messages to travel over multiple hops to reach their destination.

We do not employ any routing algorithm and instead rely on flooding to distribute messages.
This minimizes the information that an adversary can learn from the routing information or message path, and ensures that the message cannot be linked to its recipient.
We combine the flooding with \emph{implicit addressing}, which means that the recipient's address is not explicitly encoded in the message.
Instead, a device tries to decrypt every incoming message, and disregards those for which decryption fails.

For forward secrecy, we use ratchets based on time rather than ratchets based on message sequences.
This means that we avoid issues that are caused by missing or out-of-order messages.
To do this, we divide time into equally sized \emph{epochs}, each epoch being \(\Delta{} t\) seconds long.
Forward secrecy then guarantees that a key from epoch \(i\) does not leak information about ciphertexts from epochs \(i' < i\).

To ensure that devices are in the same epoch, we employ an algorithm to synchronize time in a MANET without a trusted time authority.
We note that slight deviations in the clock synchronization are tolerable, as they only affect the message delivery at the boundaries of an epoch.

Further, we avoid key negotiation issues by not relying on bidirectional communication between users.
Instead, we use a ratchet that is both \emph{non-interactive} and \emph{asymmetric}.
The non-interactiveness allows the devices to evolve their keys without communication, and the asymmetric property allows the user to easily add new contacts, without an interactive setup phase, e.g. by scanning a QR code.

\section{Specification}

In this section, we specify in more detail how we implement our design.

First, we describe the time synchronization, which ensures that all participants share approximately the same time so that their key ratchets are in the same epochs.

We then describe the forward-secure communication phase.
The security is based on a public key encryption scheme, so that the user doesn't need to agree on a key with every communication partner.
This also makes it easier to share keys via means such as QR codes.
For addressing, we use implicit addressing as in previous schemes~\cite{perry2022}.

\subsection{Time Synchronization}\label{sec:time-synchronization}

In practice, we can assume that the clocks are \enquote{roughly synchronized} up to a granularity of a few minutes --- users can manually synchronize the clocks using available reference points.
For shorter ratcheting intervals however, we need to synchronize the clocks more precisely, as otherwise the likelihood that a sender and receiver are not in the same epoch increases.

Due to our ad-hoc setting, we require a time synchronization algorithm that works in a network without a trusted leader, that can handle churn and that doesn't require reliable transport.
Byrenheid \etal{} propose an algorithm for root-node election via local voting in peer-to-peer networks~\cite{byrenheid2020},
which fits our network restrictions, and which we can adapt to do time synchronization:

In addition to its own clock, a device keeps \(n\) additional clocks, one for each neighbor that it knows about.
Every \(\Delta s\) seconds, a device then synchronizes its time with its neighbors by doing a \emph{three-majority vote}.
In this vote, the device picks three random clock values from the set of known neighborhood clock values, and if two of them are equal (within a small margin of error due to transmission delays), it will adopt this value as its new clock.
Otherwise, it picks a random clock out of the three and uses it.
After the vote, the device broadcasts its new time so that the neighboring devices can use this value in their votes, and it clears the list of known neighbor clocks to prepare for the next vote.

We note that a small timing error does not prevent communication, as long as the devices are still in the same epoch.
By increasing the duration of the epoch, the system can be made less prone to synchronization errors --- at the cost of increasing the period of time that an attacker can compromise with a leaked key.

We show pseudocode of the algorithm in \cref{alg:three-majority-vote}.

\begin{algorithm}
    \begin{algorithmic}
        \State \(i, j, k \unidraw{} N_u\)
        \If{\(\abs{t_i - t_j} < \varepsilon \vee \abs{t_i - t_k} < \varepsilon\)}
            \State \(t \coloneqq t_i\)
        \ElsIf{\(\abs{t_j - t_k} < \varepsilon\)}
            \State \(t_u \coloneqq t_j\)
        \Else
            \State \(t_u \unidraw{} \left\{t_i, t_j, t_k\right\}\)
        \EndIf
        \State \(N_u \leftarrow \emptyset\)
    \end{algorithmic}
    \caption{
        Three-Majority-Vote algorithm as executed by user \(u\).
        \(N_u\) denotes the set of \(u\)'s direct neighbors.
        \(t_x\) denotes user \(x\)'s clock value.
    }\label{alg:three-majority-vote}
\end{algorithm}

In addition to the initial synchronization of the clocks, we need to account for the fact that different clocks do not advance at the same speed.
Therefore, synchronized clocks may drift apart again over time.
We can expect this drift to be in the order of one millisecond per minute~\cite{akhmetyanov2021}.
As such, we consider this drift to be negligible for our application.
If the network persists long enough for the drift to accumulate too much, the clock synchronization algorithm will re-synchronize the clocks.

We do note that in practical implementations, GPS could be used to synchronize clocks with high accuracy~\cite{lewandowski1991}.
However, this would constitute a trusted third party, and as such, is unacceptable in our model.

\subsection{Asymmetric Ratchet}\label{sec:asymmetric-ratchet}

For our forward-secure public-key encryption scheme, we use Canetti's transformation~\cite{canetti2003} to build an \emph{asymmetric ratchet} out of a hierarchical identity-based encryption scheme.
This allows any user to encrypt messages for the recipient just by having their public key, without further key negotiations.

A HIBE is a public key encryption scheme which arranges users into a hierarchical tree of identities, and allows an identity's public and secret key to be derived from the parent's public and secret key, respectively:

\begin{definition}[HIBE]\label{def:hibe}
    A hierarchical identity-based encryption scheme (HIBE) consists of five algorithms \((\HibeKeyGen{},\allowbreak \HibeUskGen{},\allowbreak \HibeEnc{},\allowbreak \HibeDel{},\allowbreak \HibeDec{})\)~\cite{blazy2014}:
    
    \begin{itemize}
        \item
        \(\HibeKeyGen{}(\lambda) \rightarrow (\PubKey, \MasterSecret)\) is the key generation algorithm that outputs the public key and the master secret key.
    
        \item
        \(\HibeUskGen{}(\MasterSecret, \HibeId) \rightarrow (\PrivKey_\HibeId, \DelKey_\HibeId)\) takes the master secret key and an identity, and outputs the identity's secret key and delegation key.
    
        \item
        \(\HibeDel{}(\PrivKey_\HibeId, \DelKey_\HibeId, \HibeId') \rightarrow (\PrivKey_{\HibeId'}, \DelKey_{\HibeId'})\) takes the secret key and delegation key of an identity \(\HibeId\), and can generate the secret key and delegation key of an identity \(\HibeId'\) that is a descendant of \(\HibeId\).
    
        \item
        \(\HibeEnc{}(\PubKey, \HibeId) \rightarrow (K, C)\) takes the public key and an identity, and outputs an key and its encapsulation.
    
        \item
        \(\HibeDec{}(\PrivKey_\HibeId, \HibeId, C) \rightarrow K\) takes the secret key, the identity and a ciphertext, and returns the decapsulated key.
    \end{itemize}
\end{definition}

The transformation then makes use of the hierarchical delegation feature of the HIBE to provide key evolution:
Each user has their own instance of the HIBE, such that the master public key becomes the user's public key.
The master secret key is used to derive the secret key of the root identity, but is then deleted.

Now we treat the \enquote{root identity} as the starting point, and with each key evolution step, we move along the tree of identities in a depth-first traversal --- treating each identity as an epoch.
As we move down the tree, we generate the keys for all child nodes and store them for later use.
This way, we can delete the parent key, ensuring that we cannot derive the keys to previous epochs in the future.
When we reach the maximum tree depth, we traverse to the sibling node using the previously stored keys.

To encrypt something for the user, we use \(\HibeEnc\), providing the user's public key (which is the HIBE master public key) and the identity that represents the current epoch.

This scheme can be implemented efficiently by using a stack to store the generated secret keys, with the current key being on top of the stack.
We show pseudocode that implements the transformation in terms of the HIBE functions in \cref{alg:fskeygen,alg:fsupdate,alg:fsencrypt,alg:fsdecrypt}.
Additionally, we show the concept of identities-as-epochs in \cref{fig:identity-epoch-tree}.

\begin{algorithm}
    \begin{algorithmic}
        \State \((pk, sk) \leftarrow \HibeKeyGen{}\)
        \State \(t \coloneqq 0\)
        \State \((usk, udk) \leftarrow \HibeUskGen{}(t, sk)\)
        \State \Return public key: \((t, pk)\)
        \State \Return secret key: \((t, [(usk, udk)])\)
    \end{algorithmic}
    \caption{\(\FsKeyGen\)}\label{alg:fskeygen}
\end{algorithm}

\begin{algorithm}
    \begin{algorithmic}
        \State \Input \((t, [(usk_1, udk_1), \ldots, (usk_i, udk_i)])\)
        \State \(t' \coloneqq \mathrm{next}(t)\)
        \If{\(\mathrm{leaf}(t)\)}
            \State \Return \((t', [(usk_1, udk_1), \ldots, (usk_{i-1}, udk_{i-1})])\)
        \Else
            \State \((usk_\rightarrow, udk_\rightarrow) \leftarrow \HibeDel{}(usk_i, udk_i, t, \mathrm{right}(t))\)
            \State \((usk_\leftarrow, udk_\leftarrow) \leftarrow \HibeDel{}(usk_i, udk_i, t, \mathrm{left}(t))\)
            \State \Return \((t', [(usk_1, udk_1), \ldots, (usk_\rightarrow, udk_\rightarrow),\allowbreak (usk_\leftarrow,\allowbreak udk_\leftarrow)])\)
        \EndIf
    \end{algorithmic}
    \caption{\(\FsUpdate\)}\label{alg:fsupdate}
\end{algorithm}

\begin{algorithm}
    \begin{algorithmic}
        \State \Input \((t, pk, m)\)
        \State \((K, C) \leftarrow \HibeEnc{}(pk, t)\)
        \State \(C' \leftarrow \mathrm{AES.Encrypt}(h(K), m)\)
        \State \Return \((C, C')\)
    \end{algorithmic}
    \caption{\(\FsEncrypt\)}\label{alg:fsencrypt}
\end{algorithm}

\begin{algorithm}
    \begin{algorithmic}
        \State \Input \((t, [\ldots, (usk_i, udk_i)], (C, C'))\)
        \State \(K \leftarrow \HibeDec{}(usk_i, t, C)\)
        \State \(m \leftarrow \mathrm{AES.Decrypt}(h(K), C')\)
        \State \Return m
    \end{algorithmic}
    \caption{\(\FsDecrypt\)}\label{alg:fsdecrypt}
\end{algorithm}

\definecolor{colorFirstDerive}{HTML}{1CA216}
\definecolor{colorSecondDerive}{HTML}{9C16A2}
\begin{figure}
    \centering
    \begin{tikzpicture}[
        epoch/.style={draw,rectangle,rounded corners=1mm},
        firstDerive/.style={color=colorFirstDerive},
        secondDerive/.style={color=colorSecondDerive},
    ]
        \node[epoch] (n0) at (0, 0) { 0 };
        \node[epoch,firstDerive] (n1) at (-0.7, -1) { 1 };
        \node[epoch,secondDerive] (n2) at (-1, -2) { 2 };
        \node[epoch,secondDerive] (n3) at (-0.4, -2) { 3 };
        \node[epoch,firstDerive] (n4) at (0.7, -1) { 4 };
        \node[epoch] (n5) at (0.4, -2) { 5 };
        \node[epoch] (n6) at (1, -2) { 6 };

        \draw (n0) -- (n1);
        \draw (n0) -- (n4);
        \draw (n1) -- (n2);
        \draw (n1) -- (n3);
        \draw (n4) -- (n5);
        \draw (n4) -- (n6);

        \draw[->, firstDerive, dashed] (n0) edge[bend right] (n1);
        \draw[->, secondDerive, dashed] (n1) edge[bend right] (n2);

        \draw[dotted,firstDerive] (n1) -- (n4);
        \draw[dotted,secondDerive] (n2) -- (n3);
    \end{tikzpicture}
    \caption{Identities labelled with their epoch number. A key in the tree can be derived from the key of its parent. When switching from epoch \textcolor{colorFirstDerive}{0 to 1}, the key for epoch 4 is also generated and stored for future reference so that key 0 can be deleted. When switching from \textcolor{colorSecondDerive}{1 to 2}, the key for epoch 3 is stored and key 1 is deleted.}
    \label{fig:identity-epoch-tree}
\end{figure}
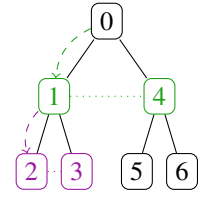

\subsection{The \prot{} Protocol}

We now combine the previously specified elements to create the \prot{} protocol.
In our protocol, each device runs the time synchronization (\cref{sec:time-synchronization}) in the background to keep its epoch counter synchronized with nearby devices by doing periodic three-majority votes.
When a user then sends a message \(m\), it is encrypted with the recipient's public key and the current epoch \(\xi \leftarrow \FsEncrypt(t, pk_{\mathrm{recipient}}, m)\) before the encrypted payload \(\xi\) is sent to the devices in reach.

When a device receives an encrypted payload, it floods that payload to its neighboring devices.
In addition, the device attempts to decrypt it using its secret key and the current epoch: \(m \leftarrow \FsDecrypt(t, sk, \xi)\).
If the decryption succeeds, the message is displayed to the user.
An overview of the protocol is shown in \cref{fig:overview}.

\newcommand\timeSync[1]{%
    \draw[->,gray] (left-na) -- node[below=0.5mm,inner sep=0pt,midway,sloped,gray] { \scriptsize\(\text{\faClockO} = 8\)} (#1)%
}
\newcommand\msgSend[1]{%
    \draw[->,gray] (right-na) -- node[midway,sloped,fill=white,inner sep=0pt] { \scriptsize\faEnvelope\textsubscript{\,\faLock} } (#1)
}
\newcommand\msgForward[2]{
    \draw[->,gray] (#1) -- node[midway,sloped,fill=white,inner sep=0pt] { \scriptsize\faEnvelope\rlap{\textsuperscript{\,\faMailForward}}\textsubscript{\,\faLock} } (#2)
}
\definecolor{colorTimeSync}{HTML}{1CA216}
\definecolor{colorMessageFlood}{HTML}{9C16A2}
\begin{figure}
    \centering
    \begin{tikzpicture}[
        device/.style={draw,circle,inner sep=0pt,minimum size=4mm},
        timeSync/.style={color=colorTimeSync},
        messageFlood/.style={color=colorMessageFlood},
    ]
        \draw[timeSync,rounded corners=1mm] (-3.9, 4) rectangle (-0.1, 0);
        \draw[messageFlood,rounded corners=1mm] (3.9, 4) rectangle (0.1, 0);

        \node[timeSync,align=center] at (-2, 4.2) { \textsc{Time Sync} };
        \node[messageFlood,align=center] at (2, 4.2) { \textsc{Message Flood} };


        \node[timeSync,device] (left-na) at (-2, 1) {a};
        \node[timeSync,device] (left-nb) at (-1, 1.5) {b};
        \node[timeSync,device] (left-nc) at (-3, 0.7) {c};
        \node[timeSync,device] (left-nd) at (-2.6, 2) {d};
        \node[timeSync,device] (left-ne) at (-3, 3) {e};
        \node[timeSync,device] (left-nf) at (-1.6, 3.3) {f};

        \timeSync{left-nb};
        \timeSync{left-nc};
        \timeSync{left-nd};

        \draw[gray, dashed] (-1.9, 1.9) rectangle (-0.7, 2.9);
        \node[align=left,gray,execute at begin node=\setlength{\baselineskip}{5pt}] at (-1.3, 2.35) {%
            \scriptsize\(\text{\faClockO}_a = 8\)\\
            \scriptsize\(\text{\faClockO}_e = \ldots\)\\
            \scriptsize\(\text{\faClockO}_f = \ldots\)\\
        };
        \draw[gray, dashed] (-1.9, 2.1) edge[bend right] (left-nd);
        

        \node[messageFlood,device] (right-na) at (2, 1) {a};
        \node[messageFlood,device] (right-nb) at (-1+4, 1.5) {b};
        \node[messageFlood,device] (right-nc) at (-3+4, 0.7) {c};
        \node[messageFlood,device] (right-nd) at (-2.6+4, 2) {d};
        \node[messageFlood,device] (right-ne) at (-3+4, 3) {e};
        \node[messageFlood,device] (right-nf) at (-1.6+4, 3.3) {f};

        \msgSend{right-nb};
        \msgSend{right-nc};
        \msgSend{right-nd};

        \msgForward{right-nd}{right-ne};
        \msgForward{right-nd}{right-nf};
    \end{tikzpicture}
    \caption{Overview of the two protocol components. On the left, the \textcolor{colorTimeSync}{time synchronization} is shown, in which device \enquote{a} broadcasts its current epoch and device \enquote{d} keeps track of the epoch broadcasts it has seen. On the right, the \textcolor{colorMessageFlood}{message flooding} is shown, in which \enquote{a} sends a message that is flooded all the way to \enquote{f}.}\label{fig:overview}
\end{figure}
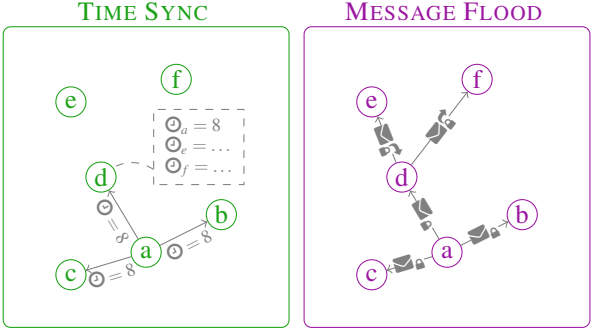
\let\timeSync\undefined
\let\msgSend\undefined
\let\msgForward\undefined

\subsection{Implementation}

We have implemented our protocol as an Android application, using Bluetooth and Bluetooth Low Energy to transmit messages.
The necessary cryptographic primitives are implemented in Rust and provided as Android NDK libraries, while the user interface and wireless transfer are implemented in Kotlin.

As the basis for our HIBE, we choose the construction by Blazy \etal{}~\cite{blazy2014}.
This produces little overhead and ensures that the ciphertexts cannot be linked to a key.
We have implemented this scheme on top of the BLS12-381 curve.

To improve the successful transport of messages close to the end of an epoch, we also implement a technique we call \emph{smooth epoch rollover}:
We only use half of the desired epoch time \(\Delta{}t\), but keep the key of the previous epoch in memory.
This way, a message that is sent close before the end of an epoch can still be decrypted, but the overall forward secrecy does not decrease.

\section{Security}

The security of our system comes in two parts.
The first part is the semantic security of ciphertexts in the sense that an adversary cannot learn information from encrypted messages.
The second part is the anonymity of ciphertexts in the sense that an adversary cannot tell for whom a ciphertext is destined.
As our scheme is forward secret, we want both securities to hold even if the adversary corrupts a future key.

\subsection{Semantic Security}

The semantic security stems from the security of the underlying HIBE.
Canetti's transformation ensures that the resulting PKE is fs-CPA secure if the HIBE is secure~\cite[Theorem 4]{canetti2003}.
As the underlying HIBE is SN-CPA secure, our scheme is fs-CPA secure.

\subsection{Receiver-Message-Unlinkability}

In order to prove that \prot{} fulfills \(\RML\), we need to prove that an adversary cannot link ciphertexts to a specific person based on their public key.
This property is usually called key-privacy.
In our model, key-privacy corresponds to receiver-message-unlinkability, as there are no other pieces of information that could link a message to its recipient (i.e. there is no routing information or directed sending).

While our underlying HIBE does promise anonymity, it does so in the context of a single instantiation of the scheme.
The anonymity therefore refers to the HIBE identity, which in our case is used as the epoch identifier.
Translated to our context, the scheme is \enquote{epoch anonymous}.

What we need to show in addition is that ciphertexts from different instantiations cannot be distinguished, which then means that ciphertexts cannot be linked to public keys and therefore to users in our system.
This gives us proper key-privacy for our forward secret encryption scheme.

We formalize this using the FS-ANON game as follows:

\begin{definition}{FS-ANON}
    Given an adversary \(\adversary{}\) and a challenger \(\challenger{}\).
    The FS-ANON game proceeds as follows:
    
    \begin{enumerate}
        \item
        \(\challenger{}\) generates two key pairs, \((\PubKey{}_0, \PrivKey{}_0) \leftarrow \FsKeyGen{}\) and \((\PubKey{}_1, \PrivKey{}_1) \leftarrow \FsKeyGen{}\).
    
        \item
        \(\challenger{}\) passes \(\PubKey{}_0\), \(\PubKey{}_1\) to \(\adversary{}\).
    
        \item
        \(\adversary{}\) passes a message \(m\) to \(\challenger{}\).
    
        \item
        \(\challenger{}\) randomly chooses \(b \unidraw{} \bits{}\).
    
        \item
        \(\challenger{}\) computes \(c \leftarrow \FsEncrypt{}(0, pk_b, m)\) and sends \(c\) to \(\adversary{}\).

        \item
        \(\challenger{}\) computes \(\PrivKey{}'_0 \leftarrow \FsUpdate{}(\PrivKey{}_0)\), \(\PrivKey{}'_1 \leftarrow \FsUpdate{}(\PrivKey{}'_1)\) and sends them to \(\adversary{}\).
    
        \item
        \(\adversary{}\) produces a guess \(b'\) and sends \(b'\) to \(\challenger{}\).
    
        \item
        \(\adversary{}\) wins if \(b = b'\).
    \end{enumerate}
    
    We call a scheme FS-ANON secret if the winning advantage of any PPT adversary \(\adversary{}\) is negligible.
\end{definition}

We can see that the game captures two properties:
The adversary has access to both public keys, therefore the game captures the notion that the adversary can link ciphertexts to specific public keys.
Additionally, the adversary gets access to future secret keys, therefore the game captures the unlinkability in a \emph{forward secret} way.

We prove that our construction is FS-ANON secret using a hybrid argument:

\begin{game}\label{game:0}
    \Cref{game:0} is the original FS-ANON game.
\end{game}

\begin{game}\label{game:1}
    \Cref{game:1} is the same as \Cref{game:0}, except that in step 5 we change the computation of the ciphertext:
    \begin{align*}
        (K, C) &\unidraw{} \mathcal{K}^\mathrm{FS} \times \mathcal{C}^\mathrm{FS} \\
        C'     &\leftarrow \mathrm{AES.Encrypt}(h(K), m) \\
        c      &\coloneqq (C, C')
    \end{align*}
    That is, instead of properly encapsulating a key using \(\HibeEnc{}\), we now use a random key.
\end{game}

\begin{lemma}\label{lemma:1}
    The winning advantage of \(\adversary{}\) in \Cref{game:1} is negligible.
\end{lemma}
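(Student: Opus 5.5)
The plan is to show that in \Cref{game:1} the challenge ciphertext \(c = (C, C')\) is statistically independent of the bit \(b\). Then the adversary's view is independent of \(b\), and its winning probability is exactly \(\nicefrac{1}{2}\). The key point is that in \Cref{game:1} neither component of \(c\) is computed from \(\PubKey_b\) any longer. The pair \((K, C)\) is drawn uniformly from \(\mathcal{K}^\mathrm{FS} \times \mathcal{C}^\mathrm{FS}\) without reference to either public key. The symmetric part \(C' \leftarrow \mathrm{AES.Encrypt}(h(K), m)\) depends only on \(K\) and on the adversarially chosen message \(m\), and \(m\) is fixed before \(b\) is drawn (step 3 precedes step 4).

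I would proceed in the following order.

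\textbf{Step 1.} Write out the adversary's full view in \Cref{game:1}: it consists of \(\PubKey_0\) and \(\PubKey_1\), the ciphertext \(c = (C, C')\), and the updated keys \(\PrivKey'_0, \PrivKey'_1\).

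\textbf{Step 2.} Observe that the key pairs are generated and updated identically and independently of \(b\). The updated keys are given to the adversary for both instances regardless of \(b\). (I would read step 6 as updating \(\PrivKey_0\) and \(\PrivKey_1\); the \(\PrivKey'_1\) in the definition is presumably a typo.) Hence \((\PubKey_0, \PubKey_1, \PrivKey'_0, \PrivKey'_1)\) has the same distribution for \(b=0\) and \(b=1\).

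\textbf{Step 3.} Argue that \(c\) is sampled using only fresh randomness and \(m\). Its distribution conditioned on the rest of the view is therefore identical for both values of \(b\). Because the random encapsulation is independent of every key, it does not matter that the secret keys are revealed afterwards. Combining Steps 2 and 3, the joint distribution of the view is the same for \(b = 0\) and \(b = 1\). So any \(b'\) output by \(\adversary\) equals \(b\) with probability exactly \(\nicefrac{1}{2}\), and the advantage is \(0\), which is in particular negligible.

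I expect the main obstacle to be making the sampling space precise. \(\mathcal{C}^\mathrm{FS}\) must be a single space shared by both HIBE instances and must not depend on the public key. Otherwise the uniformly drawn \(C\) could leak which instance it belongs to, for example if the two public keys used different group parameters. I would handle this by fixing that both instances of \(\FsKeyGen\) use the same public group description (BLS12-381 with fixed generators). The encapsulation space is then a fixed product of group elements, and the key space is a subset of \(\Gt\). This is also what the later hybrid step from \Cref{game:0} to \Cref{game:1} will need: the anonymity and pseudorandomness of the Blazy \etal{} HIBE under MDDH, namely that real encapsulations are indistinguishable from uniform elements of this common space.
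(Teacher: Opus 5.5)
Your proposal is correct and follows essentially the same argument as the paper, whose proof simply notes that in \Cref{game:1} the encapsulated key is random and independent of both public keys, so the challenge ciphertext \(c\) carries no information about \(b\). Your version is more careful: you show that the whole view \((\PubKey_0, \PubKey_1, c, \PrivKey'_0, \PrivKey'_1)\) is independent of \(b\), so the advantage is exactly \(0\). You also point out that \(\mathcal{K}^\mathrm{FS} \times \mathcal{C}^\mathrm{FS}\) must be a common, key-independent space, which the paper only makes explicit in the proof of \Cref{lemma:0-1}.
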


\begin{proof}
    The key used in \Cref{game:1} is random and does not depend on either of the selected public keys.
    As such, the challenge ciphertext \(c\) contains no information about the key that has been used.
\end{proof}

\begin{lemma}\label{lemma:0-1}
    No PPT adversary can distinguish \Cref{game:0} and \Cref{game:1}.
\end{lemma}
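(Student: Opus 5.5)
We argue by reduction to the anonymity (in the strong, ciphertext-pseudorandomness sense) of the underlying HIBE of Blazy \etal{}~\cite{blazy2014}, which rests on MDDH. That HIBE has pseudorandom ciphertexts: for a fixed public key, a target identity, and any secret keys for identities that are \emph{not} ancestors of the target, the pair \((K, C)\) output by \(\HibeEnc{}\) is computationally indistinguishable from a uniform element of \(\mathcal{K}^\mathrm{FS} \times \mathcal{C}^\mathrm{FS}\). This notion is stronger than semantic security plus identity anonymity. It is what we need here, because a uniformly random \((K, C)\) is independent of which public key was used.

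The hybrid goes through an intermediate game. In \Cref{game:0} the challenge is \(\HibeEnc{}(\PubKey_b, 0)\). We cannot replace both instances in one step, since only one of them is used. Instead, for each fixed value of \(b\) we build a reduction \(\mathcal{B}\) that embeds the HIBE challenge public key as \(\PubKey_b\). It then generates the other key pair \((\PubKey_{1-b}, \PrivKey_{1-b})\) itself, honestly via \(\FsKeyGen{}\). Equivalently, we may guess \(b\) and lose a factor of \(2\).

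\(\mathcal{B}\) hands both public keys to \(\adversary{}\) and receives \(m\). It asks the HIBE challenger for a challenge on identity \(0\) (the root epoch) and receives \((K^*, C^*)\), which is either real or random. It forms \(c = (C^*, \mathrm{AES.Encrypt}(h(K^*), m))\).

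The key step is answering the forward-secret leak \(\PrivKey'_b = \FsUpdate{}(\PrivKey_b)\). By \cref{alg:fsupdate}, this updated key consists exactly of the HIBE secret and delegation keys for the two children of the root, namely \(\mathrm{left}(0)\) and \(\mathrm{right}(0)\), while the root key itself is deleted. \(\mathcal{B}\) obtains these from the HIBE key-extraction oracle. The query is admissible because neither child is an ancestor of, or equal to, identity \(0\). \(\PrivKey'_{1-b}\) is computed locally. In the real case \(\mathcal{B}\) simulates \Cref{game:0} perfectly; in the random case it simulates \Cref{game:1}. Hence the distinguishing advantage between the two games is bounded by twice the HIBE pseudorandomness advantage, which is negligible under MDDH.

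The main obstacle is this admissibility and simulation step. The HIBE anonymity notion in~\cite{blazy2014} must genuinely allow the adversary to hold keys, including delegation keys, for descendants of the challenge identity. It must also give pseudorandomness of the whole ciphertext rather than mere indistinguishability between two identities. We would first check that the Blazy \etal{} security game permits \(\HibeUskGen{}\) queries on non-prefix identities with delegation keys and is stated as real-versus-random (their ``anonymous'' IND-HID-CPA). If it is phrased only as a two-identity game, we would instead re-derive pseudorandomness of \(C\) directly from the MDDH step in their proof, in which the challenge ciphertext vector is replaced by a uniform one.
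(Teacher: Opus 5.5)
Your proposal is correct and follows essentially the paper's route. Both reduce to the real-vs-random PR-HID-CPA security of the Blazy \etal{} HIBE: embed the challenge master public key as one of the two receiver keys, generate the other pair honestly, answer the leaked \(\FsUpdate{}\) key with key-extraction queries on the root's two children (admissible because they are not prefixes of the root), and form the challenge as \((C^*, \mathrm{AES.Encrypt}(h(K^*), m))\) from the root-identity challenge. The one difference favours you: the paper always plants the challenge in \(\PubKey_0\) and forwards \(\adversary{}\)'s guess straight to the PR-HID-CPA challenger, whereas planting it in \(\PubKey_b\), with \(b\) known to the reduction, is what actually makes the real case coincide with the original FS-ANON game.
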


\begin{proof}
    Assuming that there is an adversary \(\adversary{}\) that distinguishes \Cref{game:0} and \Cref{game:1}, we will build an adversary \(\adversary{}'\) that breaks PR-HID-CPA.

    We note that the space \(\mathcal{K}^\mathrm{FS}\) is \(\Gt\), and thus coincides with the key space \(\mathcal{K}\) of the underlying HIBE.
    The space \(\mathcal{C}^\mathrm{FS}\) is \(\Ge^{k + 1} \times \Ge^{n}\), and thus coincides with the ciphertext space \(\mathcal{C}\) of the underlying HIBE.
    Both of those spaces only depend on the public group parameters, and do \emph{not} depend on any elements of the master key.

    We now construct \(\adversary{}'\) to break PR-HID-CPA with the help of \(\adversary{}\):

    \begin{enumerate}
        \item
        \(\adversary{}'\) calls \textsc{Initialize} to get \((\mathrm{pk}_0, \mathrm{dk}_0)\).

        \item
        \(\adversary{}'\) calls \(\textsc{USKGen}\) to request the secret keys and build \(\PrivKey{}'_0\) as shown in \Cref{alg:fsupdate}.

        \item
        \(\adversary{}'\) calls \(\HibeKeyGen{}\) to get \((\mathrm{pk}_1, \mathrm{sk}_1, \mathrm{dk}_1)\).

        \item
        \(\adversary{}'\) uses \(\HibeDel\) to build \(\PrivKey{}'_1\) as shown in \Cref{alg:fsupdate}.

        \item 
        \(\adversary{}'\) sends \((0, [(\mathrm{pk}_0, \mathrm{dk}_0)]\) and \((0, [(\mathrm{pk}_1, \mathrm{dk}_1)])\) to \(\adversary{}\).

        \item
        \(\adversary{}\) sends the message \(m\) to \(\adversary{}'\).
        
        \item
        \(\adversary{}'\) calls \(\textsc{Enc}(\mathrm{root})\), where \(\mathrm{root}\) is the root of the identity hierarchy, to retrieve the challenge tuple \((K^*, C^*)\).

        \item
        \(\adversary{}'\) computes \(c \coloneqq \mathrm{AES.Encrypt}(h(K^*), m)\) and sends \((C^*, c)\) to \(\adversary{}\).

        \item
        \(\adversary{}\) sends the guess \(b'\) to \(\adversary{}'\), who then sends it to the PR-HID-CPA challenger.
    \end{enumerate}

    Since Canetti's transformation requires Step 2 to only requests the keys of identitites that are either descendant (suffixes) of the current epoch, or siblings, a prefix is never requested.
    Hence the post-condition of the PR-HID-CPA game is never violated.
\end{proof}

\begin{theorem}
    Given a HIBE that is PR-HID-CPA~\cite{blazy2014} secure, our construction is FS-ANON secret.
\end{theorem}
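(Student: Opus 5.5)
The theorem follows from a standard two-step hybrid argument using \Cref{game:0} and \Cref{game:1} together with \Cref{lemma:1} and \Cref{lemma:0-1}. Write \(\Pr[W_i]\) for the probability that \(\adversary{}\) outputs \(b' = b\) in Game \(i\). The advantage of \(\adversary{}\) in the FS-ANON game is \(\abs{\Pr[W_0] - \nicefrac{1}{2}}\). By the triangle inequality,
\begin{align*}
    \abs{\Pr[W_0] - \tfrac{1}{2}} \le \abs{\Pr[W_0] - \Pr[W_1]} + \abs{\Pr[W_1] - \tfrac{1}{2}}.
\end{align*}
I will bound each term on the right separately.

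\textbf{The second term.} I will use \Cref{lemma:1}, and I will make it precise rather than rely on its informal phrasing. In \Cref{game:1} the pair \((K, C)\) is sampled uniformly from \(\mathcal{K}^\mathrm{FS} \times \mathcal{C}^\mathrm{FS}\). As noted in the proof of \Cref{lemma:0-1}, these spaces depend only on the public group parameters, which are shared by both instantiations. Then \(C' = \mathrm{AES.Encrypt}(h(K), m)\) is computed from \(K\) and \(m\) only. Hence the challenge ciphertext \(c\) is sampled from a distribution independent of \(b\), \(\PubKey_0\), \(\PubKey_1\), \(\PrivKey'_0\) and \(\PrivKey'_1\). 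The adversary's entire view is therefore independent of \(b\), so \(\Pr[W_1] = \nicefrac{1}{2}\) exactly.

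\textbf{The first term.} This is the main step, and it is where \Cref{lemma:0-1} enters. The reduction \(\adversary{}'\) has to embed its PR-HID-CPA challenge into the public key \(\PubKey_b\) that \(\adversary{}\) is challenged on. The PR-HID-CPA game, however, has only one master public key. I will therefore have \(\adversary{}'\) choose a bit \(b\) itself and place the PR-HID-CPA instance in slot \(b\). It generates the other key pair honestly with \(\HibeKeyGen{}\), so it knows that slot's full secret and can compute its updated key \(\PrivKey'_{1-b}\) directly. For slot \(b\), it assembles \(\PrivKey'_b\) from \textsc{USKGen} queries on the children of the root, following \Cref{alg:fsupdate}. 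These identities are descendants of the challenge identity \(\mathrm{root}\) rather than prefixes of it, so I expect the query restriction of PR-HID-CPA to be satisfied. The main obstacle is to check this carefully against the exact post-condition in~\cite{blazy2014}: a key for a descendant of the challenge identity could decapsulate the challenge, and the game may forbid exactly such queries. If it does, I would instead shift the challenge epoch so that the released keys belong to siblings, in line with the fs-CPA constraint \(j < i\).

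\textbf{Completing the reduction.} \(\adversary{}'\) answers the challenge with \((C^*, \mathrm{AES.Encrypt}(h(K^*), m))\). If \((K^*, C^*)\) is real, \(\adversary{}\) sees exactly \Cref{game:0}; if it is random, \(\adversary{}\) sees exactly \Cref{game:1}. \(\adversary{}'\) outputs \(1\) if \(\adversary{}\) guesses \(b\) correctly and \(0\) otherwise. Its PR-HID-CPA advantage is then \(\abs{\Pr[W_0] - \Pr[W_1]}\), which is negligible by assumption. Combining the two bounds, \(\adversary{}\)'s FS-ANON advantage is negligible, which proves the theorem.
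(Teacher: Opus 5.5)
Your proposal is correct and follows the paper's proof: the hybrid from the real FS-ANON game to the random-key game, combining \Cref{lemma:1} (which you sharpen to \(\Pr[W_1] = 1/2\)) with the \Cref{lemma:0-1} reduction to PR-HID-CPA that builds the epoch-1 key from \textsc{USKGen} queries on the root's children --- and your random choice of the embedding slot, with \(\adversary{}'\) outputting whether \(\adversary{}\) guessed correctly, is a cleaner version of the paper's reduction, which always embeds the challenge in slot 0 and forwards \(b'\) directly. The obstacle you flag is not one, because PR-HID-CPA only forbids \textsc{USKGen} queries on prefixes of the challenge identity, and a key for a descendant of \(\mathrm{root}\) cannot decapsulate a ciphertext for \(\mathrm{root}\) (this is exactly what forward secrecy rests on); that is fortunate, since FS-ANON fixes the challenge at epoch 0 and releases the epoch-1 keys, so your fallback of shifting the challenge epoch would not have been available.
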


\begin{proof}
    The proof follows from \Cref{lemma:1} and \Cref{lemma:0-1}.
\end{proof}

\subsection{Robustness of the Time Synchronization}

An adversary can affect the time synchronization either by introducing many malicious devices with false time broadcasts, or by deviating from the protocol altogether (e.g., by sending different times to each of the neighboring devices).
However, such attacks only reduce the availability of the system, 
and are therefore out of scope for our purposes.

\section{Evaluation}

We perform our evaluation in multiple steps:
First, we perform microbenchmarks to evaluate the usability of our cryptographic primitives in practice.
Then, we use the \omnet{} simulation framework\footnote{\url{https://omnetpp.org}} to evaluate the time synchronization and the ratio of successfully arrived messages in different scenarios.
Finally, we use our Android prototype perform a practical test.

\subsection{Encryption Microbenchmark}

As a first step, we do microbenchmarks to evaluate whether our encryption scheme is fast enough on mobile devices to keep up with the rate of incoming messages.
For that, we have used our implementation of the ratchet in Rust\footnote{\url{https://www.rust-lang.org/}}, and cross-compiled it to Android smartphones.
We also ran the benchmarks on a laptop\footnote{Lenovo Thinkpad E14 AMD G4, Ryzen 5 5625U, 16 GiB RAM} to provide reference values using the \texttt{criterion} tool\footnote{\url{https://crates.io/crates/criterion}}.

We expect the key generation and private key ratcheting algorithms to be the slowest, as they have to do the most work internally.
The public key ratcheting step only needs to adjust the HIBE identity, so we expect it to take negligible time.
Additionally, we expect decryption and encryption to be fast regardless of the data size, since they use a symmetric cipher for the data and only use the HIBE to derive a symmetric key.

Our experiments confirm our expectations:
Both encryption and decryption take less than 5 milliseconds (on a laptop) and 7 milliseconds (on a smartphone) for all tested message sizes from 512 bytes to 10 KiB.
Public key ratcheting takes almost no time, as the algorithm simply needs to advance the epoch counter.
Private key ratcheting takes around 12 milliseconds on a laptop, and around 121 milliseconds on a smartphone.
While this is a relatively long time, a user typically only has a single private key to ratchet, so in practice this is not a problem.

The slow key generation is negligible, as it is a one-time setup operation that can be done at installation time.

We also summarize those results in \cref{tab:runtime-ratcheting}.

\begin{table}
    \centering
    \begin{tabular}{ccc}
        \toprule
                               & ThinkPad        & Pixel 6 \\
        \midrule
        Key generation         & \qty{160.49}{\ms}   & \qty{259.09}{\ms} \\
        Public key ratcheting  & \qty{2.73}{\ns}     & \qty{79.45}{\ns} \\
        Private key ratcheting & \qty{11.52}{\ms}    & \qty{120.72}{\ms} \\
        Encryption             & \qty{3.93}{\ms}     & \qty{5.22}{\ms} \\
        Decryption             & \qty{1.76}{\ms}     & \qty{6.70}{\ms} \\
        \bottomrule
    \end{tabular}
    \caption{Average run times for our implementation of the algorithms in \cref{sec:asymmetric-ratchet}}
    \label{tab:runtime-ratcheting}
\end{table}

While our algorithms are slower than symmetric encryption or simpler public-key encryption methods, we can say that they are still fast enough to not be a problem in practice.

\subsection{Simulation Assumptions}\label{subsec:evaluation-assumptions}

Since the behavior of the network depends on the behavior of its participants, we need to specify our assumptions about the movement of the users and the chosen parameters.
We start by defining three movement models, which will form the basis of our evaluation:

\begin{itemize}
    \item
    \emph{Static}:
    In this model, all actors stand in a grid with no movement.
    Each device is connected to the devices of nearby users within a given radius.
    This model approximates the situation e.g. at a sit-in or sit-down strike, after the participants have taken their places and are no longer moving.

    \item
    \emph{Converging}:
    In this model, the actors form small groups that move and converge to a central point.
    Once they arrive at the central point, they will remain static for a while, before diverging again.
    This model describes the buildup to a large gathering, as multiple smaller groups converge on a central location.
    We include this model to evaluate if groups that are internally synchronized can \emph{merge} with other groups.

    \item
    \emph{Real-World}:
    In this model, we use motion data captured from a real festival~\cite{blanke2014}.
    While this data is not from a protest, we still use it to approximate real-life data from the gathering of a large number of people.
\end{itemize}

We have included the code that we used to generate the different models together with the code we use for the actual simulations.

We also set our system parameters to epochs of one minute, and a sending rate of one message every 30 seconds for the users.
The first message from a user is delayed by a random delay, to ensure that not all users send at the same time.
We also assume that the channels between the devices have a limited bandwith of 1.4\,Mbps.

For each parameter combination, we ran 16 simulation runs with different random starting seeds to produce a set of 16 result values.
Each simulation itself encompasses 5 hours of simulated time.

\subsection{Time Synchronization}

The next step is to evaluate the time synchronization aspect of our protocol.
Since our ratcheting is based on time, the clocks of the network participants must be synchronized, otherwise they will use keys from the wrong epoch.

To do this, we take the movement models described in \Cref{subsec:evaluation-assumptions} and add an \omnet{} simulation of the communication behavior which runs the three-majority time synchronization.
We then look at the time it takes for the network to reach a state where 90\% of its participants share the same time.
We also consider various parameter combinations to see how the system behaves as the number of users, the density of users or the number of attackers increases.

From Byrenheid \etal{}~\cite{byrenheid2020}, we can expect the synchronization to \emph{eventually} converge.
For a higher user count, we expect the synchronization to take longer, because there are more clocks to synchronize and more different clock values in the network.
At higher density, we expect the synchronization to be faster, as each user has more neighbors to communicate with.
Therefore, the clock values propagate through the network faster.
For a larger number of attackers, we expect the synchronization to be more chaotic, as the attackers can broadcast false values without adjusting their own clocks.

Our results show that the time synchronization can be as fast as 25 seconds in networks that are small or dense, but for larger networks or sparsely populated networks can take up to 10 minutes.
As attackers are introduced to the system, the synchronization becomes more chaotic, as the synchronization process depends on whether or not the adversarial times are part of the random selection in the three majority voting.

These results are shown in \Cref{fig:eval-tts}.
Additionally, when evaluating with real movement traces~\cite{blanke2014}, we can see that the convergence to a shared time is slower, but the algorithm still reaches 90\% of synchronized clocks after around 2 hours.

\begin{figure*}
    \centering
    \includegraphics[width=0.33\linewidth]{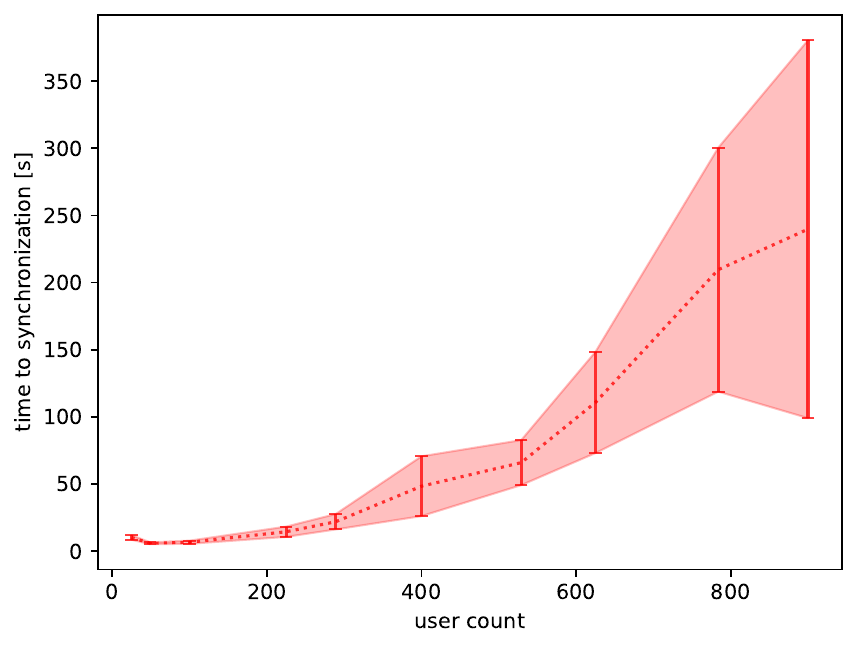}
    \includegraphics[width=0.33\linewidth]{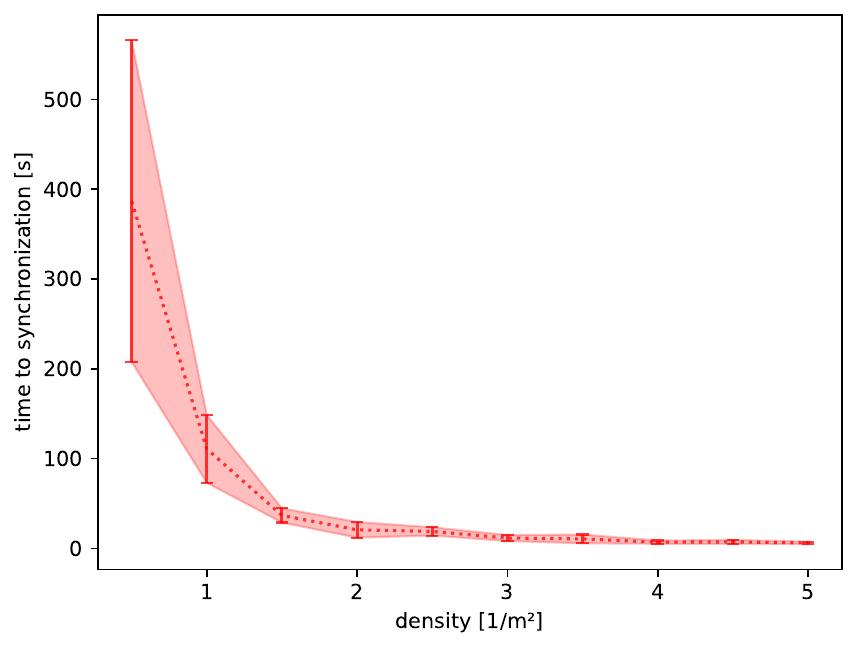}
    \includegraphics[width=0.33\linewidth]{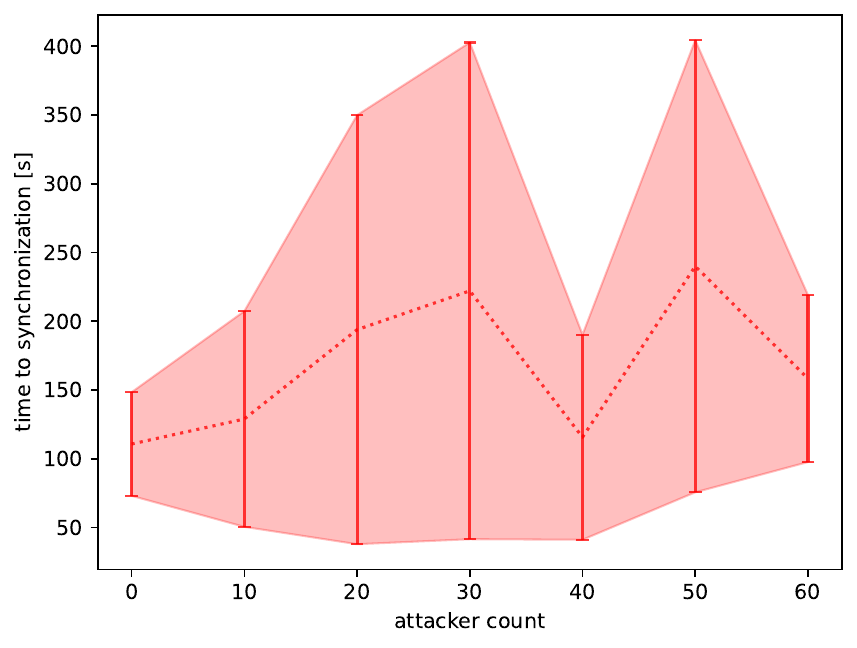}
    \caption{Time to network synchronization for a variable user count (left), a variable density (middle) and a variable attacker count (right). The standard parameters are 625 users, 1 user per square meter, and no attackers.}
    \label{fig:eval-tts}
\end{figure*}

When we look at the converging movement model, we can see that the time needed to synchronize the network is generally much higher, as the groups of users are initially out of range of each other.
However, synchronization is still achieved, confirming that our algorithm can merge multiple groups that are internally synchronized into one large group.
This result is shown in \Cref{fig:converging-tts}.

\begin{figure}
    \centering
    \includegraphics[width=\linewidth]{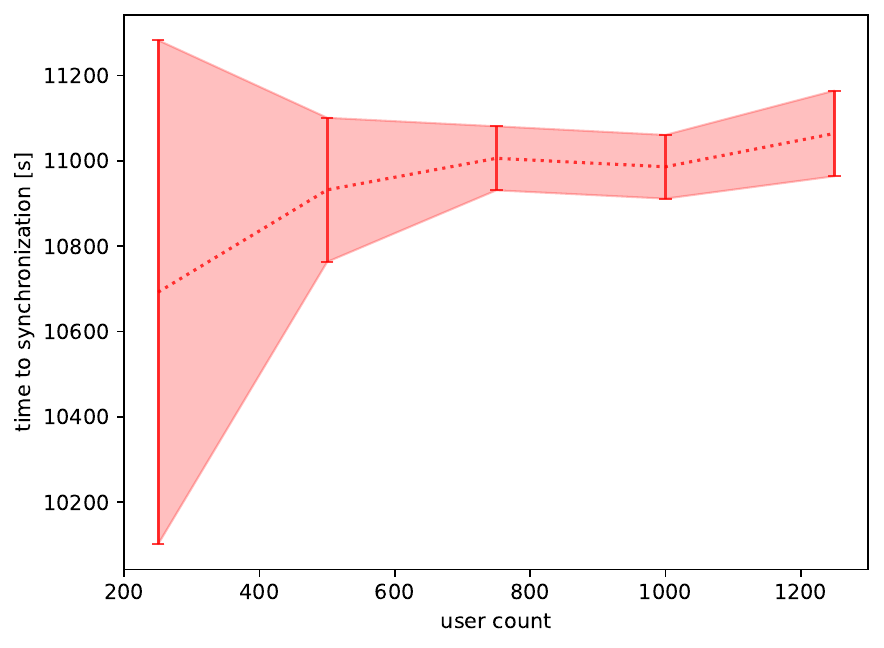}
    \caption{Time to network synchronization for a variable user count in the converging scenario.}
    \label{fig:converging-tts}
\end{figure}

For the real-world data, we can see that the synchronization takes a long time, as the users are initially not connected.
However, the share of people which share the same epoch continually increases, and it reaches 90\% after 2.5 hours.
This is shown in \Cref{fig:zueri-reachability}.

\begin{figure}
    \centering
    \includegraphics[width=\linewidth]{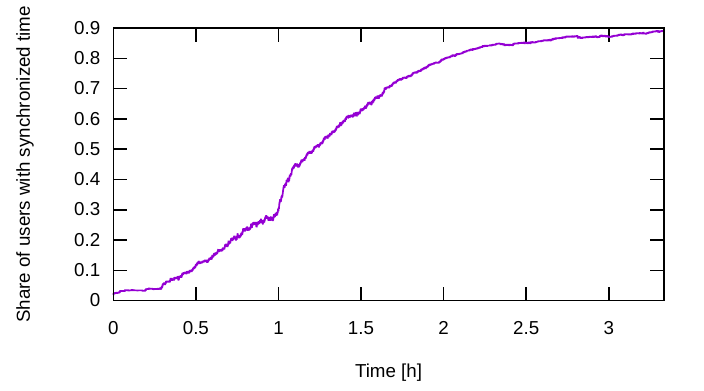}
    \caption{Share of the network in the real-life scenario that shares the same epoch as time progresses.}
    \label{fig:zueri-reachability}
\end{figure}

From this we can conclude that the time synchronization is good enough in practice to provide the required level of synchronization.

\subsection{Network Simulations}

We use our network model in \omnet{} to assess the overall message delivery in different scenarios.
To do so, we look at the average share of successfully delivered messages after an initial network stabilization period of 5 minutes.
A message is successfully delivered if it arrives at its intended recipient during the correct epoch (before the key is deleted).

For this experiment, we expect results similar to those for time synchronization:
For the user count, we expect only a small impact to the successful message ratio, up to the point where the bandwidth of the system is exhausted and more users cause messages to be dropped.
As we increase the density, we expect the successful message ratio to increase, as the better connectivity improves the performance of the system.
As we add attackers, we expect the ratio of successful messages to decrease, as the attackers disturb the time synchronization process.

Our results in \Cref{fig:eval-ar} show that the amount of attackers has a clear negative impact on the share of successful messages, dropping by 7 percentage points from 99\% to 92\%.
Additionally, increasing the number of users first improves the performance, but only up to a point of around 600 users.
After that, the negative effects of the higher bandwidth requirements and the longer time it takes to synchronize the clocks lead to a degradation in performance.
For the density, we can see that there is a slight improvement in the successful message ratio for denser crowds.

\begin{figure*}
    \centering
    \includegraphics[width=0.33\linewidth]{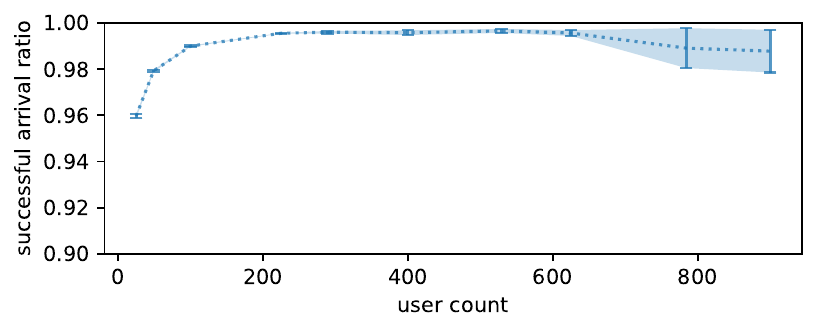}
    \includegraphics[width=0.33\linewidth]{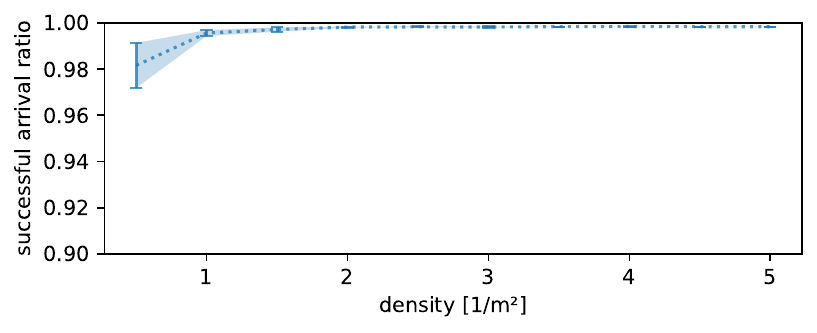}
    \includegraphics[width=0.33\linewidth]{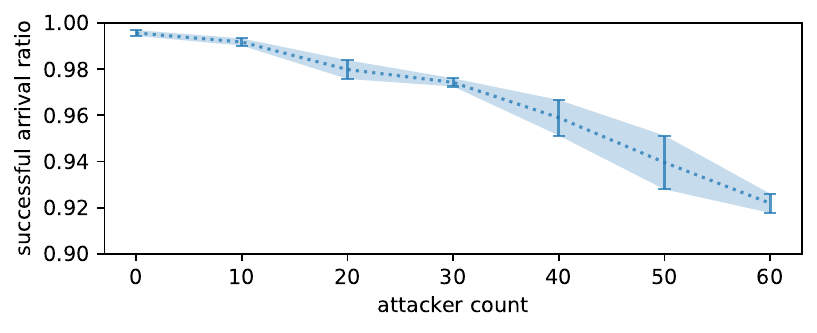}
    \caption{Average success ratio for a variable user count (left), variable density (middle) and variable attacker count (right). The standard parameters are 625 users, 1 user per square meter and no attackers.}
    \label{fig:eval-ar}
\end{figure*}

When we look at the success ratio in the converging model, we can see that fewer messages arrive successfully than in the static scenario:
Many users are out of range of each other during the phase in which the users converge, and during the phase in which the users diverge.
This result is shown in \Cref{fig:converging-ar}.

\begin{figure}
    \centering
    \includegraphics[width=\linewidth]{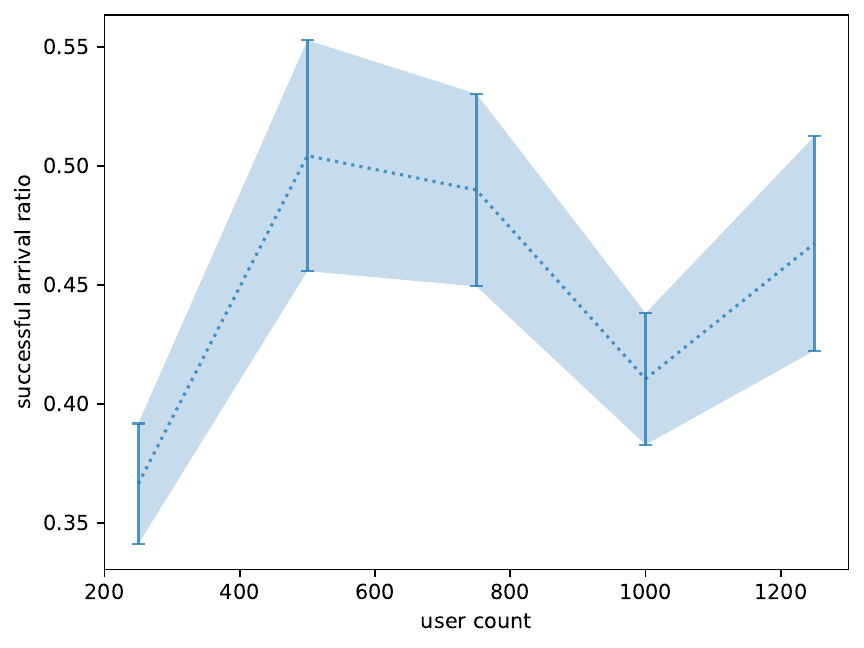}
    \caption{Average success ratio for a variable user count in the converging scenario.}
    \label{fig:converging-ar}
\end{figure}

When simulating on real-world movement, we can see that the initial share of successfully delivered messages is low, but then increases once the time synchronization has synchronized a larger portion of the network.
This result is shown in \Cref{fig:zueri-ar}.

\begin{figure}
    \centering
    \includegraphics[width=\linewidth]{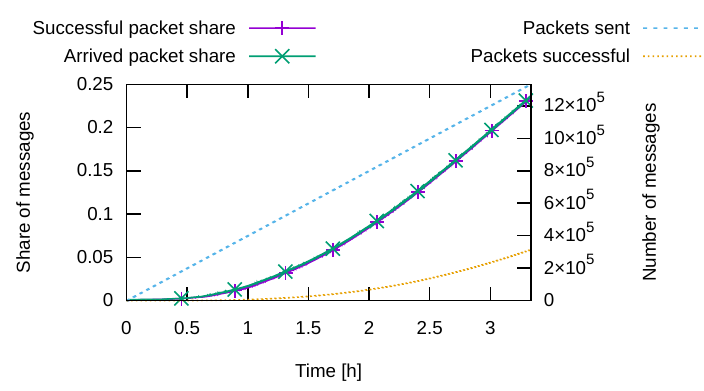}
    \caption{Messages in the real-life scenario that arrived (\emph{Arrived packet share}) and that arrived during the right epoch (\emph{Successful packet share}).}
    \label{fig:zueri-ar}
\end{figure}

\subsection{Proof-of-Concept Application}

We have used our proof-of-concept application to test our protocol on real hardware.
For that, we have installed the app on 4 of our devices with different processing power and sent messages over different distances in a crowded cafeteria.

Our results in \Cref{tab:poc-result} show that most messages arrive in the correct epoch and can be successfully decrypted.
Device 4 has lost the most messages, which was a result of the device being too far away from the remaining devices, such that it did not receive the messages in time.

\begin{table}
    \begin{tabular}{lrrrr}
    \toprule
                 & Addressed & Successful & Successful\% \\
    \midrule
        Dev. 1 & 23 & 17 & 74\% \\
        Dev. 2 & 5 & 4 & 80\% \\
        Dev. 3 & 4 & 3 & 75\% \\
        Dev. 4 & 16 & 7 & 44\% \\
    \bottomrule
    \end{tabular}
    \caption{Results from the proof-of-concept test.}\label{tab:poc-result}
\end{table}

\section{Conclusion}

We construct \prot{}, a protocol that provides forward secret messaging in unreliable ad-hoc networks without the need for bidirectional communication, improving on existing protocols that use standard constructions like the double-ratchet.
We formally prove that our protocol satisfies our forward-secrecy and key-privacy goals, resulting in a secure system.

Further, we show that our protocol is practical by providing benchmark results for the cryptographic operations as well as simulations for the network communication.
These simulations show that we can successfully deliver more than 90\% of the messages sent even in networks with 10\% adversarial users.

In addition to the functionality and security of our protocol, we are also improving the usability of existing solutions:
Users can add contacts by scanning their public key, without the need for a handshake.
This can be a great benefit in situations where access to the internet or other reliable communication channels has already been lost.

Finally, we provide a functional proof-of-concept implementation of our protocol as an Android app, that can be used and extended for further research.

\section*{Acknowledgements}
We would like to thank Dennis Hofheinz and Roman Langrehr for their cryptographic expertise, Ulf Blanke for the discussions around the evaluation on real-world data and Alexander Linder for his help in creating the prototype app.
This work was supported by the German Research Foundation (DFG, Deutsche Forschungsgemeinschaft) as part of Germany's Excellence Strategy -- EXC 2050/1 -- Project ID 390696704 Cluster of Excellence "Centre for Tactile Internet with Human-in-the-Loop" (CeTI) of Technische Universität Dresden and by 
the KASTEL Security Research Labs (Topic 46.23 of the Helmholtz Association).

\printbibliography

\end{document}